\newtheorem{observation}{Observation}
\newcommand{\cout}[1]{}
\newcommand{\poly}{\text{poly}}
\newcommand{\mattnote}[1]{{#1}}
\title{Revenue Maximization and Ex-Post Budget Constraints}
\author{CONSTANTINOS DASKALAKIS$^1$
	\affil{M.I.T}
	NIKHIL R. DEVANUR$^2$
	\affil{Microsoft Research}
	S. MATTHEW WEINBERG$^3$
	\affil{Princeton University}
}
\begin{document}

\begin{abstract}

We consider the problem of a revenue-maximizing seller with $m$ items for sale to $n$ additive bidders with hard budget constraints, assuming that the seller has some prior distribution over bidder values and budgets. The prior may be correlated across items and budgets of the same bidder, but is assumed independent across bidders. We target mechanisms that are Bayesian Incentive Compatible, but that are \emph{ex-post} Individually Rational and \emph{ex-post} budget respecting. Virtually no such mechanisms are known that satisfy all these conditions and guarantee any revenue approximation, even with just a single item. We provide a computationally efficient mechanism that is a $3$-approximation with respect to all BIC, ex-post IR, and ex-post budget respecting mechanisms. Note that the problem is NP-hard to approximate better than a factor of $16/15$, even in the case where the prior is a point mass~\cite{ChakrabartyGoel}. We further characterize the optimal mechanism in this setting, showing that it can be interpreted as a \emph{distribution over virtual welfare maximizers.}

We prove our results by making use of a black-box reduction from mechanism to algorithm design developed by~\cite{CaiDW13b}. Our main technical contribution is a computationally efficient $3$-approximation algorithm for the algorithmic problem that results by an application of their framework to this problem. The algorithmic problem has a mixed-sign objective and is NP-hard to optimize exactly, so it is surprising that a computationally efficient approximation is possible at all. In the case of a single item ($m=1$), the algorithmic problem can be solved exactly via exhaustive search, leading to a computationally efficient exact algorithm and a stronger characterization of the optimal mechanism as a distribution over virtual value maximizers.

\end{abstract}

\begin{bottomstuff}
$^1$ costis@csail.mit.edu, supported by a Microsoft Research Faculty Fellowship and NSF Awards CCF-0953960 (CAREER) and CCF-1101491.\\
$^2$ nikdev@microsoft.com.\\
$^3$ sethmw@cs.princeton.edu.
\end{bottomstuff}
\maketitle
\section{Introduction}

Most of auction theory crucially depends on the assumption of quasi-linear utilities, that the utility is equal to  valuation minus payments. 
This assumption fails when bidders are budget constrained.\footnote{The terms financially constrained  bidders or bidders with liquidity constraints are used synonymously.} 
Auctions with budget constrained bidders are  commonplace, and prominent examples of this are ad-auctions and auctions for government licensing such as the FCC spectrum auction. 
An interesting example of budget constraint occurs in the auction for professional cricket players in the Indian Premier League: the league imposes a budget constraint on all the teams as a means of ensuring well balanced teams. 
Another source of budget constraints is what \citet{CheGale1998} call the {\em moral hazard problem}: procurement is often delegated and budget constraints are imposed as a means of controlling the spend. 
A budget represents the bidder's \emph{ability to pay}, in contrast to the valuation which represents his {\em willingness to pay}. 
For this reason, budgets may be more tangible and easier to estimate than valuations. 
It is therefore important to understand how budget constraints impact the design of auctions; this has been well established by now \citep{CheGale1998, PaiVohra, BenoitKrishna, LaffontRobert, Maskin, MalakhovVohra, CheGale2000, BhattacharyaGGM10}. 

The theory of auctions in the presence of budget constraints on bidders lags far behind the theory of auctions without budgets.
For instance, consider the design of {\em optimal}  (revenue maximizing) auctions that are Bayesian incentive compatible (BIC) and ex-post individually rational (IR). 
While \citet{Myerson} gives a beautiful theory characterizing the optimal auction for any single parameter domain, no such characterization is known in the presence of private budgets (that could be correlated with the valuation). 
As a way to deal with this difficulty, previous papers have considered special cases and auctions with a subset of the desired properties. (See \prettyref{sec:related} for details.) We adopt the Computer Science approach of approximation, while incorporating all the desired properties. {\em The main result of this paper is a 3-approximation to the optimal auction in the class of auctions that are 
	\begin{itemize}
		\item BIC,
		\item ex-post IR and 
		\item ex-post budget respecting, with private budgets that could be correlated with the valuations, 
	\end{itemize} 
for multiple heterogenous items and additive valuations.} This is the first constant factor approximation for this class of auctions. 
Moreover, the computational problem, even without any incentive constraints is already NP-Hard to approximate within a ratio of $16/15$ \citep{ChakrabartyGoel}. This too suggests that an approximation is necessary. 

\subsection{Overview of Techniques}
{We prove our main result by making use of an algorithmic framework developed in~\cite{CaiDW13b}. The computational aspect of their framework provides a \emph{black-box reduction} from a wide class of Bayesian mechanism design problems to problems of purely algorithm design. More specifically, they show that any $\alpha$-approximation algorithm for a certain incentive-free algorithmic problem (induced by the mechanism design problem at hand) can be leveraged to find a BIC,  IR mechanism that is also an $\alpha$-approximation (to the optimal BIC, IR mechanism) in polynomial time. Significant further details on their reduction and how to employ it can be found in Section~\ref{sec:CDW}. After applying their framework to our problem, there is still the issue of solving the algorithmic problem that pops out of the reduction. This turns out to be essentially a (virtual) welfare maximization problem (without budgets), but where bidder types are somewhat involved. The optimization involves a mixed sign objective (i.e. the objective is a sum of several terms which can be positive or negative). Such optimization problems are typically solvable \emph{exactly} in polynomial time or computationally hard to approximate within any finite factor, but rarely in between (due to the mixed signs in the objective). Interestingly, we obtain a 3-approximation for our mixed-sign objective problem despite the fact that it is NP-hard to optimize exactly. The design and analysis of our algorithm can be found in Section~\ref{sec:main}.}

{Cai et. al.'s framework also contains a structural result. We use it to show that the optimal auction in our setting is a \emph{distribution} over virtual welfare maximizers. By this, we mean that the optimal mechanism maintains a distribution over $n$ mappings, one mapping per bidder that maps types to virtual types, and, given a vector of reported types, it samples $n$ mappings from this distribution, uses them to map the reported types to virtual types, and proceeds to choose an allocation that optimizes virtual welfare. Note that by virtual types in the previous sentence we do \emph{not} mean the specific virtual types as computed by Myerson's virtual transformation, which aren't even defined for multi-dimensional types, but just \emph{some} virtual types that may or may not be the same as the true types. In particular, each mapping in the support of the mechanism's distribution will take as input a type (which is an additive function with non-negative item values plus a non-negative budget), and output a virtual type without a budget constraint and whose valuation function is the sum of a budgeted-additive function\footnote{{A function $v(\cdot)$ is budgeted-additive if there exists a $b$ such that $v(S) = \min\{b,\sum_{i \in S} v(\{i\})\}$ for all $S$. Note that this is different from an additive buyer with a budget, and that a budgeted-additive buyer indeed has quasi-linear utilities.}} with non-negative item values (which depends on the input type in a very structured way) plus an additive function with possibly negative item values (which may be unstructured with respect to the input type). We provide a formal statement of this structural claim in Section~\ref{sec:main} as well.}
Note that for the special case of a single item auction, this gives a particularly simple structure: the virtual types are now {just a single (possibly negative) real number, which could be interpreted as a virtual value.} The optimal auction simply maps reported types to virtual {\em values} and assigns the item to the bidder with the highest virtual value.

\subsection{Related Work}
\label{sec:related} 

The result that comes closest to characterizing the optimal auction is that of \citet{PaiVohra}: they characterize the optimal budget respecting BIC auction for a single item. Their auction is implemented as an all-pay auction and is therefore not ex-post IR.
 {They show that the optimal BIC, \emph{interim} IR mechanism that respects budgets ex-post takes on a form similar to Myerson's, but with additional \emph{pooling} to enforce that no bidder is asked to pay more than her budget, while also maintaining that no bidder has incentive to underreport their budget.} Earlier, \citet{LaffontRobert} and \citet{Maskin} considered the case where valuations are private information but budgets are common knowledge and identical. 
\citet{MalakhovVohra} study the setting where there are two bidders, one has a known budget constraint while the other does not.
\citet{CheGale2000} characterize the optimal pricing scheme for a single item with a single bidder,  with private valuation and budget that could be correlated with each other. 
The limited special cases considered by these papers point to the difficulty of characterizing the optimal auction, which motivates the search for efficient approximations. 

Another line of work ranks different auction formats by the revenue generated in the presence of budgets.  
\citet{CheGale1998} compare first price, second price and all-pay auctions, while 
\citet{BenoitKrishna} compare sequential and simultaneous auctions.

In the computer science tradition, \cite{BhattacharyaGGM10} give a 4-approximation for multiple items with additive valuations, but they assume that the budgets are publicly known, and the auction is not ex-post IR. 
\citet{ChawlaMM11} give a 2-approximation in a single parameter domain, but assume that the budgets are public. 
They also consider private budgets, where budgets and values are independently distributed, in single parameter matroid domains, and MHR Distributions, and give a $3(1+e)$-approximation. Finally,~\citet{CaiDW12}, provide exactly optimal mechanisms for multiple items, additive valuations and private budgets, but their auctions are interim-IR.
Once again, all these auctions make additional assumptions when compared to us. 

\citet{CaiDW13b} give a general reduction from mechanism design to algorithm design, which we use for our results. 
For the special case of a single item auction with private budgets, we show that the algorithmic problem obtained through this reduction is quite easy to solve optimally, resulting in exactly optimal single-item auctions with budgets. 
However, when there are multiple items the resulting algorithmic problem becomes NP-Hard \citep{ChakrabartyGoel}. We give a 3-approximation to this algorithmic problem which through the reduction gives a 3-approximately optimal multi-item auction with budgets.
Recently, \cite{Bhalgatetal} showed that (a weaker form of) the reduction of \citet{CaiDW13b} could be obtained using the simpler multiplicative weight update method instead of the ellipsoid algorithm used originally, and consider the variant of our setting where the items are \emph{divisible}. The algorithmic problem in this case is once again easy. \citet{DaskalakisW15} also use the reduction in \cite{CaiDW13b} to design an auction for a non-linear objective, namely the \emph{makespan} of an assigment of jobs to machines. 

The auction design problem has also been considered in a \emph{worst-case} model, as opposed to a Bayesian model. 
 A standard framework is that of {\em competitive} auctions, where a bound is shown on the ratio of the revenue of an optimal auction to the revenue of the given auction on any instantiation of valuations and budgets.  
 \citet{Borgsetal2005} and \citet{Abrams2006} give constant competitive auctions for multi-unit auctions, under an assumption of \emph{bidder dominance}, that the contribution of a single bidder to the total revenue is sufficiently small. 
\cite{DevanurHH13} give constant competitive auctions for single parameter downward-closed domains with a public, common budget constraint. 
Since the worst-case setting is decidedly more difficult than the Bayesian setting, these results are not comparable to ours. 
Another line of work considers the design of \emph{Pareto-optimal} auctions:  \citet{DobzinskiLaviNisan} characterize single item auctions that are Pareto-optimal, with public budgets and show an impossibility of a similar auction for private budgets. 
\citet{GoelMirrokniPaes-Leme} extend this auction to a more general poly-matroidal setting.

\subsection{Conclusions and Future Work}
{The goals of revenue-optimality, ex-post individual rationality, and ex-post budget feasibility seem to be at odds with one another. This is highlighted by the fact that, prior to our work, no known auctions even approximately satisfied all three conditions, even with just a single item and private budgets that are independent of values. We provide a computationally efficient 3-approximation for the significantly more general case of auctions for multiple heterogeneous goods and additive bidders with private budgets that can be correlated with their values. While this model is already quite general compared to the previous state-of-the-art, it is an important direction to see if our results can be extended to more complex classes of bidder valuations, or to more complex constraints on feasible allocations.}
In particular, well studied classes of valuations such as gross substitutes would be interesting next steps.


\section{Preliminaries}
 We begin with formal definitions of the mechanism design problem we study. 
 We then outline the reduction of \citet{CaiDW13b} (\prettyref{sec:CDW}) 
 and its implications (\prettyref{sec:instantiation}) for our problem. 
Finally we state a related problem (\prettyref{sec:gap}), the Generalized Assignment Problem, which we use in the design of our algorithm. 
\paragraph{Bidders} There are $n$ bidders, each with additive valuations over $m$ items and a hard budget constraint. Specifically, bidder $i$ has value $v_{ij}$ for item $j$, value $\sum_{j\in S} v_{ij}$ for set $S$, and hard budget $b_i$. We denote by $\vec{v}_i$ the vector of bidder $i$'s values for all $m$ items. We denote by $\mathcal{D}_i$ the joint distribution of $(\vec{v}_i, b_i)$. We denote by $\mathcal{D} = \times_i \mathcal{D}_i$ the joint distribution of all bidders' valuations and budgets. 

\paragraph{Mechanisms} Our goal is to design \emph{Bayesian Incentive Compatible} (BIC) mechanisms that are \emph{ex-post Individually Rational} (IR) and that respect budgets ex-post. Formally, for a (randomized) mechanism $M$, we can denote by $x^M_{ij}(\vec{v},\vec{b},r)$ to be $1$ if bidder $i$ receives item $j$ when the profile of values/budgets reported to $M$ is $(\vec{v},\vec{b})$ and the random seed used by $M$ is $r$, or $0$ otherwise. Similarly, we denote by $q^M_{i}(\vec{v},\vec{b},r)$ to be the price paid by bidder $i$ in the same conditions. We can then define the interim allocation probability $\pi^M_{ij}(\vec{v}_i,b_i)$ to be the probability that bidder $i$ receives item $j$ when reporting $(\vec{v}_i,b_i)$ over the randomness of other agent's (valuation,budget)s $(\vec{v}_{-i},\vec{b}_{-i})$ being drawn from $\mathcal{D}_{-i}$, and any randomness in $M$. We can similarly define the interim price $p^M_i(\vec{v}_i,b_i)$ to be the expected payment made by bidder $i$ over the same randomness. Formally, $\pi^M_{ij}(\vec{v}_i,b_i) = \mathbb{E}_{(\vec{v}_{-i},\vec{b}_{-i})\leftarrow \mathcal{D}_{-i}, r}[x^M_{ij}(\vec{v_i};\vec{v}_{-i},b_i;\vec{b}_{-i},r)]$ and $p^M_{ij}(\vec{v}_i,b_i) = \mathbb{E}_{(\vec{v}_{-i},\vec{b}_{-i})\leftarrow \mathcal{D}_{-i}, r}[q^M_{ij}(\vec{v_i};\vec{v}_{-i},b_i;\vec{b}_{-i},r)]$. Formal definitions of BIC, IR, and ex-post budgets are below.

\begin{definition}\label{def:BIC}(Bayesian Incentive Compatible) A mechanism $M$ is BIC if for all bidders $i$, and types $(\vec{v}_i,b_i), (\vec{v}'_i, b'_i)$ the following holds:
$$\vec{v}_i \cdot \vec{\pi}^M_i(\vec{v}_i, b_i) - p^M_i(\vec{v}_i, b_i) \geq \vec{v}_i \cdot \vec{\pi}^M_i(\vec{v}'_i, b'_i) - p^M_i(\vec{v}'_i,b'_i).$$

\noindent A mechanism is said to be $\epsilon$-BIC if for all bidders $i$, and types $(\vec{v}_i,b_i), (\vec{v}'_i, b'_i)$ the following holds:
$$\vec{v}_i \cdot \vec{\pi}^M_i(\vec{v}_i, b_i) - p^M_i(\vec{v}_i, b_i) \geq \vec{v}_i \cdot \vec{\pi}^M_i(\vec{v}'_i, b'_i) - p^M_i(\vec{v}'_i,b'_i) - \epsilon.$$
\end{definition}

\begin{definition}\label{def:IIR}(Interim/Ex-Post Individually Rational) A mechanism $M$ is interim IR if for all bidders $i$, and types $(\vec{v}_i,b_i)$ the following holds:
$$\vec{v}_i \cdot \vec{\pi}^M_i(\vec{v}_i, b_i) \geq p^M_i(\vec{v}_i, b_i) .$$
Further, it is ex-post IR if for all bidders $i$, all profiles $(\vec{v},\vec{b})$ and random seeds $r$, we have:
$$  \vec{v}_i \cdot \vec{x}^M_i(\vec{v},\vec{b},r)\geq q^M_i(\vec{v},\vec{b},r).$$
\end{definition}

\begin{definition}(Ex-Post Budget Respecting) A mechanism $M$ respects budgets ex-post if for all type profiles $(\vec{v},\vec{b})$, all random seeds $r$, and all bidders $i$ we have:
$$q^M_i(\vec{v},\vec{b},r) \leq b_i.$$
\end{definition}

\begin{definition}(No Positive Transfers) A mechanism $M$ has no positive transfers if for all type profiles $(\vec{v},\vec{b})$, all random seeds $r$, and all bidders $i$ we have:
$$q^M_i(\vec{v},\vec{b},r) \geq 0.$$

\end{definition}

\subsection{Reduction from Mechanism to Algorithm Design}\label{sec:CDW} In recent work,~\cite{CaiDW13b} provide an algorithmic framework for mechanism design, showing how to design mechanisms by solving purely algorithmic problems. We use this reduction to reduce our mechanism design problem to an algorithm design problem and show a 3-approximation to this algorithmic problem. In the rest of this section, we state the  general formulations of the mechanism design  and the corresponding algorithm design problems considered by~\citet{CaiDW13b}. Then we give the precise statement of their reduction, and a structural characterization of the optimal mechanism obtained as a byproduct of their reduction. Finally we instantiate these to state the corresponding problems in our setting, and massage the resulting problems to simplify them.

\cite{CaiDW13b} call the mechanism design problems of study BMeD($\mathcal{F},\mathcal{V},\mathcal{O}$),\footnote{BMeD stands for \textbf{B}ayesian \textbf{Me}chanism \textbf{D}esign.} where feasibility constraints $\mathcal{F}$, possible valuations $\mathcal{V}$, and optimization objective $\mathcal{O}$ parameterize the problem. Formally, this problem is defined as:\\

\noindent\textbf{BMeD($\mathcal{F},\mathcal{V},\mathcal{O}$)}:\\
\textsc{Input}: For each bidder $i \in [n]$, a finite set $T_i \subseteq \mathcal{V}$, and a distribution $D_i$ over $T_i$, presented by explicitly listing all types in $T_i$ and their corresponding probability.\\
\textsc{Output}: A feasible (selects an allocation in $\mathcal{F}$ with probability $1$), BIC, (interim) IR mechanism for bidders drawn from $D = \times_i D_i$.\\
\textsc{Goal}: Find the mechanism that optimizes $\mathcal{O}$ in expectation, with respect to all BIC, IR mechanisms (when bidders with types drawn from $D$ play truthfully).\\
\textsc{Approximation}: An algorithm is said to be an $(\epsilon,\alpha)$-approximation if it finds an $\epsilon$-BIC mechanism whose expected value of $\mathcal{O}$ (when bidders drawn from $D$ report truthfully) is at least $\alpha\cdot \text{OPT} - \epsilon$.\\

In our problem, the feasible allocations are those that award each item to at most one bidder. So we could denote the set of feasible allocations as $[m+1]^n$ (with the convention that selecting the allocation $\vec{a}$ awards item $j$ to bidder $a_j$ if $a_j > 0$, or no one if $a_j = 0$). The possible bidder types are all additive functions over items (with non-negative multipliers), and non-negative budgets, which we could denote by $\mathbb{R}^{m+1}_+$. Our objective is revenue. To ensure that all feasible mechanisms are ex-post IR (note that their reduction only guarantees interim IR without extra work) and ex-post budget respecting, we will define the objective function $\textsc{Revenue}$ as follows. $\textsc{Revenue}$ takes as input a valuation profile $(\vec{v},\vec{b})$, an allocation $\vec{x}$ (where $x_{ij} = 1$ iff bidder $i$ is awarded item $j$), and a price vector $\vec{p}$. We define $\textsc{Revenue}(\vec{v},\vec{b},\vec{x},\vec{p}) = \sum_i p_i$, if $0 \leq p_i \leq \min\{b_i, \vec{v}_i\cdot \vec{x}_i\}$ for all $i$, or $\textsc{Revenue}(\vec{v},\vec{b},\vec{x},\vec{p}) = -\infty$ otherwise. 

{There is a subtle issue with respect to \emph{why} we want to design mechanisms that respect budgets ex-post. Specifically, is it just because the designer wishes to offer this guarantee to the bidders, who have true quasi-linear preferences? If so, then this is exactly the setting we have described so far: the designer is constrained to select a mechanism that respects budgets ex-post, but bidders will still choose how to play as if they were quasi-linear. While this motivation is certainly mathematically interesting, it is also non-standard and perhaps unrealistic. Instead, the more common motivation is because bidders physically can't pay more than their budget, and would have utility $-\infty$ if asked to do so. In this case, the designer should actively \emph{exploit} this to extract higher revenue. For example, if for all $i$, bidder $i$ is awarded all the items and charged her budget with tiny probability $\epsilon /n$, then the designer needn't worry about bidders overreporting their budget (as otherwise they'd get utility $-\infty$ with probability $\epsilon /n$). Therefore, a mechanism can be made BIC in the latter case (while losing arbitrarily little revenue) iff for all $(\vec{v}_i, b_i), (\vec{v}'_i, b'_i)$ \textbf{with $b'_i \leq b_i$}, $\vec{v}_i \cdot \vec{\pi}^M_i(\vec{v}_i, b_i) - p^M_i(\vec{v}_i, b_i) \geq \vec{v}_i \cdot \vec{\pi}^M_i(\vec{v}'_i, b'_i) - p^M_i(\vec{v}'_i,b'_i).$ Note that this is a relaxed condition of the former setting, which requires the inequality to hold \textbf{for all $b_i, b'_i$}. Fortunately, the Cai et. al. framework applies in both settings, and the resulting structure and algorithmic problem are exactly the same. So all of our theorems, exactly as stated, hold in both of the described settings.}

Informally, the main result of~\cite{CaiDW13b} states that, for all $\mathcal{F}, \mathcal{V}, \mathcal{O}$, the problem BMeD($\mathcal{F},\mathcal{V},\mathcal{O}$) can be solved in polynomial time with black-box access to a poly-time algorithm for a purely algorithmic problem that they call GOOP($\mathcal{F},\mathcal{V},\mathcal{O}$).\footnote{GOOP stands for \textbf{G}eneralized \textbf{O}bjective \textbf{O}ptimization \textbf{P}roblem.} Below, $\mathcal{V}^\times$ denotes the closure of $\mathcal{V}$ under addition and (possibly negative) scalar multiplications (so for instance, $(\mathbb{R}^m_+)^\times = \mathbb{R}^m$).\\

\noindent\textbf{GOOP($\mathcal{F},\mathcal{V},\mathcal{O}$)}:\\
\textsc{Input}: A type $t_i \in \mathcal{V}$, multiplier $m_i \in \mathbb{R}$, {and virtual valuation function $g_i \in \mathcal{V}^\times$ for each $i \in [n]$.}\footnote{For other applications, the inputs $g_i(\cdot)$ are sometimes called instead cost functions.} \\
\textsc{Output}: An allocation $x \in \mathcal{F}$ and price vector $\vec{p} \in \mathbb{R}_+^n$.\\
\textsc{Goal}: Find $\arg \max_{x \in \mathcal{F},\vec{p}}\{\mathcal{O}(\vec{t},x,\vec{p}) + \sum_i m_i p_i + \sum_i g_i(x)\}$.\\
\textsc{Approximation}: $(x^*,\vec{p}^*)$ is said to be an $\alpha$-approximation if $\mathcal{O}(\vec{t},x^*,\vec{p}^*) + \sum_i m_i p_i^*+ \sum_i g_i(x^*) \geq \alpha \cdot \arg \max_{x\in \mathcal{F},\vec{p}}\{\mathcal{O}(\vec{t},x,\vec{p}) + \sum_i m_i p_i+ \sum_i g_i(x)\}$.\\

Further below we provide much more detail on the structure of the algorithmic focus of this paper, GOOP($[n+1]^m,\mathbb{R}^{m+1}_+,\textsc{Revenue}$), but we first conclude our discussion of the reduction we employ. The main result of~\cite{CaiDW13b} states that for all $\epsilon > 0$, an $(\epsilon,\alpha)$-approximation for BMeD($\mathcal{F},\mathcal{V},\mathcal{O}$) can be obtained from a poly-time $\alpha$-approximation for GOOP($\mathcal{F},\mathcal{V},\mathcal{O}$). The additive error (and failure probability in the theorem statement) is due to a sampling procedure in the execution of the reduction. We provide a full statement of their main result below.\footnote{The theorem statement is identical in content, but reworded for clarity and cleanliness.}

\begin{theorem}\label{thm:CDW}(Theorem~4 of~\cite{CaiDW13b}) For all $\mathcal{F},\mathcal{V},\mathcal{O}$, and $\epsilon > 0$, if there is a poly-time $\alpha$-approximation algorithm, $G$, for GOOP($\mathcal{F},\mathcal{V},\mathcal{O}$), there is a poly-time $(\epsilon,\alpha)$-approximation algorithm for BMeD($\mathcal{F},\mathcal{V},\mathcal{O}$) as well. Specifically, if $\ell$ denotes the input length of a BMeD($\mathcal{F},\mathcal{V},\mathcal{O}$) instance, the algorithm runs in time $\poly(\ell, 1/\epsilon)$, makes $\poly(\ell, 1/\epsilon)$ black box calls to $G$ on inputs of size $\poly(\ell, 1/\epsilon)$, and succeeds with probability $1-\text{exp}(-\poly(\ell, 1/\epsilon))$.
\end{theorem}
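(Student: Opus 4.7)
The plan is to cast BMeD$(\mathcal{F},\mathcal{V},\mathcal{O})$ as a linear program whose variables are the interim quantities $\{\pi_{ij}(t_i),\,p_i(t_i)\}_{i,\,t_i \in T_i}$ (polynomially many, since each $T_i$ is given explicitly), and then solve that LP by the ellipsoid method using $G$ as an approximate separation oracle. The objective, being $\mathbb{E}_{\vec{t}\sim D}[\mathcal{O}]$ under the product distribution $D=\times_i D_i$, is linear in these variables; the BIC and IR constraints are explicit linear inequalities, one per pair of types per bidder, so only polynomially many in number. The nontrivial constraints are the \emph{feasibility} constraints requiring the vector of interim allocation probabilities (together with interim prices) to be implementable as the marginals of some distribution over feasible allocations in $\mathcal{F}$ and corresponding payment rules; these form a convex polytope $P$ with potentially exponentially many facets.

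The first key step is to show, via LP duality on $P$, that separating a candidate reduced form from $P$ is equivalent to solving, for a suitable vector of dual weights $(m_i,\,g_i(\cdot))$, the problem of finding an allocation $x\in\mathcal{F}$ and price vector $\vec{p}\in\mathbb{R}_+^n$ that maximizes $\mathcal{O}(\vec{t},x,\vec{p})+\sum_i m_i p_i+\sum_i g_i(x)$. This is exactly GOOP$(\mathcal{F},\mathcal{V},\mathcal{O})$, so in the exact case an oracle for GOOP gives an exact separation oracle for $P$ and running ellipsoid solves BMeD optimally. To accommodate the approximation factor $\alpha$, I would invoke the standard approximate-separation framework (in the spirit of Carr--Vempala): an $\alpha$-approximate oracle for GOOP behaves as a separation oracle for the shrunken polytope $\alpha P$, so running ellipsoid on it produces an $\alpha$-approximate LP solution $(\vec{\pi}^*,\vec{p}^*)$.

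The main obstacle, and the step I would spend the most care on, is turning this LP solution into an \emph{actual} implementable mechanism. Here I would exploit the log of the ellipsoid execution: the separating hyperplanes returned by $G$ during the run exhibit $\vec{\pi}^*/\alpha$ as a convex combination of a polynomial number of feasible deterministic allocations, each one an output of $G$. Solving a small LP over these allocations yields explicit mixing weights, and a parallel argument handles the prices; the resulting sampled allocation rule and posted payments form the mechanism we output. Correctness of BIC and IR follows because the mechanism realizes precisely the interim quantities $(\vec{\pi}^*,\vec{p}^*)$ that were feasible in our LP.

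Finally, the algorithm does not have closed-form access to the interim quantities $\pi_{ij}^M(t_i)$ and $p_i^M(t_i)$ induced by the mechanism we construct --- they are expectations over the other bidders' reports. I would estimate each relevant interim quantity by drawing $\poly(\ell,1/\epsilon)$ fresh samples from $D_{-i}$, which by Hoeffding makes the BIC inequalities and the objective correct up to additive $\epsilon$ with failure probability $\exp(-\poly(\ell,1/\epsilon))$. This sampling error is the sole source of the $\epsilon$-slack in the BIC condition and in the revenue guarantee, and of the exponentially small failure probability. Combining the LP formulation, the equivalence with GOOP via duality, the approximate-ellipsoid machinery, the explicit decomposition, and the sampling bounds yields the claimed $(\epsilon,\alpha)$-approximation within the stated running time.
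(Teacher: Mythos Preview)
This theorem is not proved in the present paper; it is quoted from \cite{CaiDW13b}, and the paper only offers a one-paragraph sketch (LP over interim forms, explicit BIC/IR constraints, separation oracle for the feasibility polytope supplied by GOOP, approximation preserved). Your proposal follows exactly that outline and fleshes it out with the standard ellipsoid\,+\,approximate-separation\,+\,decomposition machinery, so it is essentially the same approach.

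One small inaccuracy worth noting: you locate the sampling step \emph{after} the mechanism is built, as a way to estimate the interim quantities of the constructed mechanism. In the actual reduction the sampling occurs \emph{inside} the separation oracle: the hyperplane that witnesses infeasibility of a candidate reduced form is an expectation over type profiles $\vec{t}\sim D$, and it is this expectation that must be estimated by sampling (the paper alludes to this as ``a sampling procedure in the execution of the reduction''). This is where the additive $\epsilon$ and the $\exp(-\poly(\ell,1/\epsilon))$ failure probability originate. Your placement would not give an $\epsilon$-BIC guarantee on its own, since re-estimating the interims of an already-fixed mechanism does not change the mechanism. Aside from this misattribution of the sampling step, the high-level plan matches the cited proof.
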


\cite{CaiDW13b} prove Theorem~\ref{thm:CDW} above by considering a linear program that optimizes over the space of interim forms that are both truthful (that satisfy the linear constraints in Definitions~\ref{def:BIC} and~\ref{def:IIR}), and feasible (those that correspond to an actual mechanism that selects an outcome $x \in \mathcal{F}$ on every profile with probability $1$).\footnote{In fact, they need to work with a generalization of interim forms, called {\em implicit forms}, to accommodate non-additive valuations. But we describe their proof for additive valuations for clarity of exposition, and because it is relevant for our setting.} Linear constraints enforcing that an interim form is BIC and interim IR can be written explicitly, but a computationally efficient separation oracle for the space of feasible interim forms is still required in order to solve the linear program. They show how to obtain such a separation oracle with black-box access to an algorithm that solves GOOP, and that this entire process preserves approximation as well.

\cite{CaiDW13b} further provide a structural characterization of the space of all feasible mechanisms (truthful or not), leading to a structured implementation of whatever interim form is output by the LP. Specifically, they show that the extreme points of the space of feasible interim forms correspond to mechanisms that associate a virtual valuation function $g_i(t_i)(\cdot)$ and price multiplier $m_i(t_i)$ to each type $t_i \in T_i$, and then selects on profile $(t_1,\ldots,t_n)$ the allocation and price vector that solves GOOP on input $t_1,\ldots,t_n$, $m_1(t_1),\ldots,m_n(t_n)$, $\sum_i g_i(t_i)(\cdot)$. They show further that solving the linear program explicitly finds a list of virtual valuation functions and multipliers whose resulting interim forms contain the optimal (truthful) interim form in their convex hull. Theorem~\ref{thm:CDWstructure} below captures the structural aspect of their result.

\begin{theorem}\label{thm:CDWstructure}(Implicit in~\cite{CaiDW13b}) For all BMeD instances, the optimal mechanism can be implemented as a distribution over generalized objective optimizers. Specifically, there exists a distribution $\Delta$ over mappings $(f^\delta_1,\ldots,f^\delta_n)$. Each mapping $f^\delta_i$ takes types $t_i$ in $T_i$ to price multipliers $m^\delta_i(t_i) \in \mathbb{R}$ and virtual valuation functions $g^\delta_i(t_i)(\cdot)  \in \mathcal{V}^\times$. The optimal mechanism first samples $(f^\delta_1,\ldots,f^\delta_n)$ from $\Delta$, and on profile $\vec{t}$, selects the outcome and price vector $\arg\max_{x \in \mathcal{F},\vec{p}}\{\mathcal{O}(\vec{t},x,\vec{p})+\sum_i m^\delta_i(t_i)\cdot p_i + \sum_i g^\delta_i(t_i)(x)\}$. 
\end{theorem}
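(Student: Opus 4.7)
The plan is to formulate optimal mechanism design as a linear program over interim forms, invoke strong LP duality together with Carath\'eodory's theorem, and read the distribution $\Delta$ off a convex decomposition of the optimal interim form into vertices of the feasibility polytope. More concretely, the LP has variables $\pi_{ij}(t_i)$ and $p_i(t_i)$ for every bidder $i$ and every type $t_i \in T_i$, subject to the BIC inequalities of Definition~\ref{def:BIC}, the interim IR inequalities of Definition~\ref{def:IIR}, and the implicit feasibility constraint that $(\vec{\pi},\vec{p})$ lies in the polytope $P$ of interim forms realizable by some (randomized) mechanism choosing an outcome in $\mathcal{F}$ on every profile. The objective is $\mathbb{E}_{\vec{t}\sim D}[\mathcal{O}(\vec{t},\vec{x},\vec{p})]$, which for additive valuations is a linear function of $(\vec{\pi},\vec{p})$.

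Next, I would introduce Lagrange multipliers $\lambda_i(t_i,t'_i)\ge 0$ for every BIC constraint and $\mu_i(t_i)\ge 0$ for every IR constraint and write down the Lagrangian. Grouping terms type by type, the coefficient of $p_i(t_i)$ yields a scalar $m_i(t_i)\in\mathbb{R}$ and the coefficient of $\pi_i(t_i)$ (treated as a vector over allocations) yields a virtual valuation function $g_i(t_i)(\cdot)\in \mathcal{V}^\times$, since it is an affine combination of true valuations $\vec{v}_i(t_i),\vec{v}_i(t'_i)$ with possibly negative coefficients, which is exactly what $\mathcal{V}^\times$ is defined to contain. The Lagrangian therefore rewrites as $\mathbb{E}_{\vec{t}\sim D}[\mathcal{O}(\vec{t},\vec{x},\vec{p}) + \sum_i m_i(t_i)\,p_i + \sum_i g_i(t_i)(\vec{x})]$, up to a constant independent of $(\vec{\pi},\vec{p})$. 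Being linear, it is maximized over $P$ (which is the convex hull of deterministic mechanisms' interim forms) by a deterministic mechanism that on each sampled profile $\vec{t}$ solves precisely $\arg\max_{x\in \mathcal{F},\vec{p}}\{\mathcal{O}(\vec{t},x,\vec{p}) + \sum_i m_i(t_i)\,p_i + \sum_i g_i(t_i)(x)\}$, i.e., a GOOP-maximizer.

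Finally, I would invoke strong LP duality: an optimal dual $(\lambda^*,\mu^*)$ exists, and by complementary slackness the optimal primal $(\vec{\pi}^*,\vec{p}^*)$ maximizes the induced Lagrangian over $P$. The set of such maximizers is a face $F^*$ of $P$, and by Carath\'eodory $(\vec{\pi}^*,\vec{p}^*)$ is a convex combination of finitely many vertices of $F^*$, each of which is itself the interim form of a GOOP-maximizer under the induced multipliers $m_i^*(\cdot)$ and virtual valuations $g_i^*(\cdot)$. Reinterpreting each tuple $(m_i^*(\cdot),g_i^*(\cdot))_i$ as a mapping $(f_1^\delta,\ldots,f_n^\delta)$, and placing on it the convex-combination weight of the corresponding vertex, yields the distribution $\Delta$ asserted by the theorem.

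The main obstacle is the implicit description of $P$: one has to argue, via the separation-oracle-based machinery already developed in the proof of Theorem~\ref{thm:CDW}, that the LP admits an exact optimum together with an explicit decomposition of it into polynomially many GOOP-vertices, and to handle tie-breaking inside GOOP so that each sampled $(f_1^\delta,\ldots,f_n^\delta)$ reliably selects the intended vertex of $F^*$ rather than some other Lagrangian maximizer. Verifying that the grouped Lagrangian coefficients genuinely land in $\mathcal{V}^\times$ (rather than a strictly larger space), and that the resulting distribution over GOOP-mechanisms preserves BIC and IR in expectation, are the other places where care is required but no new ideas are needed beyond unpacking the construction.
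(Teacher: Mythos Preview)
Your proposal is correct and tracks the same LP-based skeleton the paper outlines just before stating the theorem (the paper does not give its own proof; it cites~\cite{CaiDW13b} and only sketches the idea). There is, however, a small difference in where the duality is applied. The paper's sketch characterizes the extreme points of the \emph{feasibility} polytope $P$ directly: every vertex of $P$ has a supporting hyperplane, and maximizing any linear functional over interim forms decouples profile-by-profile into a GOOP instance, so each vertex is a GOOP-maximizer for its own choice of $(m_i,g_i)$. The optimal interim form is then written, via Carath\'eodory, as a convex combination of such vertices, and each vertex contributes its own mapping $(f_1^\delta,\ldots,f_n^\delta)$ to $\Delta$. In your version you instead dualize the BIC/IR constraints of the \emph{full} LP, obtaining a single dual-optimal tuple $(m_i^*,g_i^*)$; the optimal primal then sits on a face $F^*$ of $P$, and the several vertices of $F^*$ you need are distinguished only by tie-breaking inside the same $\arg\max$. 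You flag this yourself as an obstacle, and it is the one place your write-up is genuinely thinner than the paper's sketch: to match the theorem's statement literally (a distribution over mappings, each selecting ``the'' $\arg\max$), you must say how each $\delta$ determines a unique optimizer---e.g., by perturbing $(m_i^*,g_i^*)$ slightly in the direction of a supporting hyperplane that exposes the intended vertex of $F^*$, which brings you back to the paper's picture of one $(m_i^\delta,g_i^\delta)$ per vertex. With that fix, the two routes coincide.
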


In the section below, we provide further details surrounding instantiations of Theorems~\ref{thm:CDW} and~\ref{thm:CDWstructure} as they pertain to the problem at hand.

\subsection{Instantiations}\label{sec:instantiation} 
The goal of this section is to provide more details of the instantiation of Theorems~\ref{thm:CDW} and~\ref{thm:CDWstructure} to our setting, but not to provide proofs (for which we refer the reader to~\cite{CaiDW13b}). We begin by describing the linear program that the reduction of~\cite{CaiDW13b} would try to solve for our setting. Below, $F([n+1]^m,\mathbb{R}^{m+1}_+,\textsc{Revenue})$ denotes the space of interim forms of all feasible  (not necessarily truthful) mechanisms. Specifically, $(O,\vec{\pi},\vec{p}) \in F([n+1]^m,\mathbb{R}^{m+1}_+,\textsc{Revenue})$ if and only if there is a mechanism $M$ that awards each item at most once on every profile, is ex-post IR and ex-post budget respecting, awards bidder $i$ item $j$ when she reports type $t_i$ with probability exactly $\pi_{ij}(t_i)$ (w.r.t. all other bidders' types and the randomness in the mechanism) and chargers bidder $i$ price $p_i(t_i)$ in expectation (over all other bidders' types and the randomness in the mechanism), and whose expected revenue is exactly $O$. With this definition in mind, the linear program they solve is stated below.\\

\noindent\textbf{Variables:}
\begin{itemize}
\item $O$, denoting the expected revenue of the interim form found.
\item $\pi_{ij}(t_i)$ for all bidders $i$, items $j$, types $t_i$, denoting the probability that bidder $i$ receives item $j$ when reporting type $t_i$.
\item $p_i(t_i)$ for all bidders $i$ and types $t_i$, denoting the expected price paid by bidder $i$ when reporting type $t_i$.
\end{itemize}
\textbf{Constraints:}
\begin{enumerate}
\item $\sum_j \pi_{ij}(t_i) \cdot v_{ij}(t_i) - p_i (t_i) \geq \sum_j \pi_{ij}(t'_i) \cdot v_{ij}(t_i) - p_i(t'_i)$, for all bidders $i$ and types $t_i, t'_i$, guaranteeing that the interim form corresponds to a BIC mechanism.
\item $\sum_j \pi_{ij}(t_i) \cdot v_{ij}(t_i) - p_i (t_i) \geq 0$, for all bidders $i$ and types $t_i$, guaranteeing that the interim form corresponds to an interim IR mechanism.\footnote{Actually, this constraint is redundant as we will also enforce that the mechanism be ex-post IR to be considered feasible.}
\item $(O,\vec{\pi},\vec{p}) \in F([n+1]^m,\mathbb{R}^{m+1}_+,\textsc{Revenue})$, guaranteeing that the interim form corresponds to a feasible mechanism.
\end{enumerate}
\textbf{Maximizing:}
\begin{itemize}
\item $O$, the expected revenue.
\end{itemize}

The solution to this LP is the interim form of the optimal mechanism. The LP can be solved in polynomial time, so long as we have a poly-time separation oracle for the space $F([n+1]^m,\mathbb{R}^{m+1}_+,\textsc{Revenue})$. \cite{CaiDW13b} shows that this can be obtained via an algorithm for the related GOOP problem, which we instantiate in our setting below.

\paragraph{Budgeted-Additive Virtual Welfare Maximization} As discussed above, in order to find (approximately) optimal mechanisms for our setting, we need to study the purely algorithmic problem GOOP($[n+1]^m,\mathbb{R}^{m+1}_+,\textsc{Revenue}$), which we pose formally below.\\

\noindent\textbf{GOOP($[n+1]^m,\mathbb{R}^{m+1}_+,\textsc{Revenue}$)}:\\
\textsc{Input}: Values $v_{ij} \geq 0$ and virtual values $w_{ij} \in \mathbb{R}$ for all $i, j$. Budget $b_i\in \mathbb{R}_+$ and price multiplier $m_i\in \mathbb{R}$ for all $i$.\\
\textsc{Output}: An allocation $\vec{x} \in \{0,1\}^{mn}$ and prices $\vec{p}$ such that $\sum_i x_{ij} \leq 1$ for all $j$ (each item awarded at most once), $\sum_j x_{ij} v_{ij} \geq p_i$ (ex-post IR), $p_i \leq b_i$ (ex-post budget respecting), and $p_i \geq 0$ (no positive transfers).\\
\textsc{Goal}: Find $\arg \max_{\vec{x},\vec{p}}\{\sum_i (m_i+1)p_i  + \sum_{ij} x_{ij} w_{ij}\}$ (virtual revenue plus virtual welfare). \\
\\
Note that in the above formulation, we have folded cases where $\textsc{Revenue}$ evaluates to $-\infty$ into feasibility constraints on the output. We make two quick further observations about the structure of GOOP($[n+1]^m,\mathbb{R}^{m+1}_+,\textsc{Revenue}$), and call the reformulation \textbf{B}udgeted-\textbf{A}dditive \textbf{V}irtual \textbf{W}elfare \textbf{M}aximization (BAVWM). Also, for cleanliness, we will replace the input price multipliers $m_i$ by $m_i - 1$ so that the term in the objective will be $\sum_i m_i p_i$. This is w.l.o.g. as each $m_i$ could be any real number.

\begin{observation}\label{obs:1}
If $m_i > 0$, the optimal choice for $p_i$ is always $\min\{b_i, \sum_j x_{ij} v_{ij}\}$. If $m_i \leq 0$, the best choice for $p_i$ is $0$.
\end{observation}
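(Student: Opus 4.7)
The plan is to exploit the fact that, once the allocation $\vec{x}$ is fixed, the objective and the feasibility constraints decouple across the price variables $p_i$, and each $p_i$ then appears linearly. Concretely, I would first note that after the reformulation, the only term in the objective $\sum_i m_i p_i + \sum_{ij} x_{ij} w_{ij}$ involving $p_i$ is $m_i p_i$, and that the constraints ex-post IR, ex-post budget respecting, and no positive transfers together pin the feasible range of $p_i$ to the interval
\[
0 \;\le\; p_i \;\le\; \min\bigl\{b_i,\ \textstyle\sum_j x_{ij} v_{ij}\bigr\}.
\]
No other constraint couples $p_i$ with $p_k$ for $k \neq i$ or with $\vec{x}$ beyond this bound. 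Hence, for any fixed $\vec{x}$, choosing the $p_i$'s optimally reduces to $n$ independent one-dimensional linear optimizations.

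The second step is then immediate: a linear function $p_i \mapsto m_i p_i$ on a compact interval attains its maximum at an endpoint, with the upper endpoint winning when $m_i > 0$, the lower endpoint winning when $m_i < 0$, and either being optimal when $m_i = 0$. This yields $p_i = \min\{b_i, \sum_j x_{ij} v_{ij}\}$ in the first case and $p_i = 0$ in the remaining cases, matching the statement. There is essentially no obstacle here; the content of the observation is really the bookkeeping point that the price variables can be eliminated from BAVWM so that downstream analysis only has to reason about the integer allocation $\vec{x}$.
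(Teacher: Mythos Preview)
Your proposal is correct and matches the paper's reasoning; in fact the paper states this as an observation without any proof, precisely because it follows immediately from the separability and linearity in each $p_i$ over the interval $[0,\min\{b_i,\sum_j x_{ij} v_{ij}\}]$ that you spell out.
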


\begin{observation}\label{obs:2}
For all possible solutions $(\vec{x}, \vec{p})$, the quality of $(\vec{x}, \vec{p})$ for the input instance $(\vec{v},\vec{w},\vec{b},\vec{m})$ is the same as for the instance $(\vec{v}',\vec{w},\vec{b},\vec{m})$ where $v'_{ij} = \min\{v_{ij}, b_i\}$, for all $i,j $.
\end{observation}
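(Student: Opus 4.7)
The plan is to verify two things: first that the BAVWM objective is literally identical on the two instances, and second that the set of feasible $(\vec{x},\vec{p})$ coincides under the substitution $v_{ij}\mapsto v'_{ij}=\min\{v_{ij},b_i\}$. Taken together, these two checks imply that the ``quality'' of every candidate $(\vec{x},\vec{p})$ (whether infeasible, or feasible with some objective value) is preserved.

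For the objective, the observation is immediate: $\sum_i m_i p_i + \sum_{ij} x_{ij} w_{ij}$ does not involve $v$ at all, so its value on $(\vec{x},\vec{p})$ is unchanged. Likewise, the budget constraint $p_i\le b_i$, the non-negativity constraint $p_i\ge 0$, and the single-award constraint $\sum_i x_{ij}\le 1$ do not depend on $v$. The only feasibility constraint that does is the ex-post IR constraint $\sum_j x_{ij} v_{ij}\ge p_i$, so the entire statement reduces to: under the remaining constraints (in particular $p_i\le b_i$), the IR constraint holds for $\vec{v}$ if and only if it holds for $\vec{v}'$.

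One direction is trivial: $v'_{ij}\le v_{ij}$ pointwise, so the $\vec{v}'$-IR constraint is the stronger of the two, and it automatically implies the $\vec{v}$-IR constraint. The other direction is the only step that requires any thought, and I would handle it by a short case split. Fix bidder $i$ and assume $\sum_j x_{ij} v_{ij}\ge p_i$ and $p_i\le b_i$. If every assigned item (i.e.\ every $j$ with $x_{ij}=1$) satisfies $v_{ij}\le b_i$, then $v'_{ij}=v_{ij}$ on the support of $x_i$ and the inequality transfers verbatim. Otherwise, pick some $j^*$ with $x_{ij^*}=1$ and $v_{ij^*}>b_i$; then $v'_{ij^*}=b_i$, so $\sum_j x_{ij}v'_{ij}\ge v'_{ij^*}=b_i\ge p_i$, where the last step uses the budget constraint.

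There is essentially no obstacle here: the content of the observation is exactly that truncating values at the budget cannot tighten the IR constraint once the price has already been capped at the budget. The only mild subtlety — and thus the single place the budget constraint is actually used — is the second case of the above split, where $v_{ij^*}$ was large enough that truncation matters, and one must appeal to $p_i\le b_i$ to close the argument.
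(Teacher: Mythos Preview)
Your argument is correct. The paper itself does not supply a proof of this observation at all; it is stated without justification and used later only to legitimize the standing assumption $v_{ij}\le b_i$ in the BAVWM reformulation (and again in the proof of Theorem~\ref{thm:rounding} to note that no processing time exceeds its machine's capacity). Your reading of ``quality'' as encompassing feasibility---equivalently, taking the objective to be $-\infty$ on infeasible pairs, consistent with how \textsc{Revenue} was defined---is the right one, and your two-case verification that the ex-post IR constraint is equivalent under the truncation (given $p_i\le b_i$ and $v'_{ij}\ge 0$) is exactly the content the paper leaves implicit.
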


In light of these, we may set all negative $m_i$ to $0$, and all $v_{ij}$ to $\min\{v_{ij},b_i\}$ without changing the problem, leading to the following reformulation.\\
\\
\noindent\textbf{Budgeted-Additive Virtual Welfare Maximization}:\\
\textsc{Input}: Budget $b_i$ for all agents. Values $v_{ij} \in [0,b_i]$ for all agents and items. Price multiplier $m_i \geq 0$ for all agents, and virtual value $w_{ij} \in \mathbb{R}$ for all agents and items.\\
\textsc{Output}: An allocation $\vec{x} \in \{0,1\}^{mn}$ such that $\sum_i x_{ij} \leq 1$ for all $j$ (each item awarded at most once).\\
\textsc{Goal}: Find $\arg \max_{\vec{x}}\{\sum_i (m_i \min\{b_i, \sum_j x_{ij} v_{ij}\} + \sum_{j} x_{ij} w_{ij})\}$.\\

Note that in the above formulation, we no longer need to optimize over the price vector, due to Observation~\ref{obs:1}. The problem can now be interpreted as just a welfare maximization problem, where bidder $i$'s valuation function is the sum of a budgeted-additive function (with non-negative item values) and an additive function (with possibly negative item values). Also, note that we can re-formulate the above problem to remove the multipliers $(m_i)_i$ from the input and the objective, by incorporating them in the $b_i$'s and the $v_{ij}$'s. We choose to leave them in so that it is more transparent how the inputs to BAVWM are related to the types reported by the bidders of the mechanism output by the~\citet{CaiDW13b} reduction.

\subsection{The Generalized Assignment Problem}\label{sec:gap}
Our main technical result will make use of a rounding algorithm for the \emph{Generalized Assignment Problem}. We give here a statement of the problem and a rounding theorem due to Shmoys and Tardos~\cite{ShmoysT93}.\\
\\
\noindent\textbf{Generalized Assignment Problem}:\\
\textsc{Input}: Processing times $p_{ij} \in \mathbb{R}_+$ and costs $c_{ij} \in \mathbb{R}$ for all machines $i$ and jobs $j$, capacities $T_i$ for all machines $i$.\footnote{Traditionally, some consider only costs $c_{ij} \in \mathbb{R}_+$, but the result we cite applies for negative costs as well.}\\
\textsc{Output}: An allocation $\vec{x} \in \{0,1\}^{mn}$ of jobs to machines such that $\sum_i x_{ij} = 1$ for all $j$ (each job is assigned) and $\sum_j x_{ij} p_{ij} \leq T_i$ (each machine processes at most its capacity).\\
\textsc{Goal}: Find $\arg \max_{\vec{x}}\{\sum_{i,j} x_{ij} c_{ij}\}$ (total cost).\footnote{Traditionally, it makes sense to minimize total cost. As costs are possibly negative, the use of max or min is irrelevant.}\\

Now, we provide an LP due to Shmoys and Tardos that outputs a fractional solution at least as good as OPT.\\

\noindent\textbf{Variables:}
\begin{itemize}
\item $x_{ij}$, for all machines $i$ and jobs $j$, denoting the fraction of job $j$ assigned to machine $i$.
\end{itemize}
\textbf{Constraints:}
\begin{enumerate}
\item $\sum_i x_{ij}= 1$, for all $j$, guaranteeing that every job is processed exactly once.
\item $\sum_j x_{ij} \leq T_i$, for all $i$, guaranteeing that no machine's capacity is violated.
\item $x_{ij} = 0$ if $p_{ij} > T_i$.
\end{enumerate}
\textbf{Maximizing:}
\begin{itemize}
\item $\sum_{i,j} x_{ij} c_{ij}$, the total cost.
\end{itemize}

\begin{theorem}\label{thm:ST}(\cite{ShmoysT93}) The optimal fractional solution to the above LP can be rounded in polynomial time to an integral solution such that:
\begin{enumerate}
\item $\sum_i x_{ij} = 1$, for all $j$.
\item $\sum_j x_{ij} \leq 2T_i$, for all $i$.
\item $\sum_j x_{ij} c_{ij} \geq \text{OPT}$.
\end{enumerate}
\end{theorem}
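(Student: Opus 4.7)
The plan is to prove Theorem~\ref{thm:ST} by transforming the fractional LP solution into an integral one via an auxiliary bipartite matching graph, following the classical Shmoys--Tardos argument. Let $x^*$ denote an optimal fractional solution to the LP. First, for each machine $i$, I would sort the jobs $j$ with $x^*_{ij} > 0$ in decreasing order of processing time $p_{ij}$ and greedily pack their fractional masses into ``slots'' $(i,1), (i,2), \ldots$, where each slot accumulates up to $1$ unit of total fractional assignment. A single job may straddle two consecutive slots, in which case I would split its mass accordingly. Let $y_{j,(i,k)}$ denote the portion of $x^*_{ij}$ assigned to slot $(i,k)$, so $\sum_k y_{j,(i,k)} = x^*_{ij}$ for every $(i,j)$ and $\sum_j y_{j,(i,k)} \leq 1$ for every slot, with equality except possibly on the last nonempty slot of each machine.

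Second, I would consider the bipartite graph $H$ with jobs on one side and slots on the other, with edge weights $y_{j,(i,k)}$ and edge costs $c_{ij}$. By construction, $y$ is a fractional $b$-matching in $H$: $\sum_{(i,k)} y_{j,(i,k)} = 1$ for every job $j$ (from LP constraint (1)) and $\sum_j y_{j,(i,k)} \leq 1$ for every slot, and its total edge cost equals $\sum_{i,j} x^*_{ij} c_{ij}$. Because the bipartite $b$-matching polytope is integral, $y$ can be written as a convex combination of integer matchings that each assign every job to exactly one slot and each slot to at most one job; at least one such integer matching has cost $\geq \sum_{i,j} x^*_{ij} c_{ij}$. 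Declaring that job $j$ is assigned to machine $i$ whenever it is matched to some slot $(i,k)$ yields an integral feasible assignment $x$ with $\sum_i x_{ij} = 1$ for every $j$ and $\sum_{i,j} x_{ij} c_{ij} \geq \text{OPT}$, settling parts (1) and (3).

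The hard part will be part (2), the factor-$2$ capacity bound, which is where the sorting in the first step is used crucially. Fix a machine $i$; at most one job is placed in each slot $(i,k)$ by the integer matching. For $k \geq 2$, every job $j$ with $y_{j,(i,k)} > 0$ satisfies $p_{ij} \leq p_{ij'}$ for every $j'$ with $y_{j',(i,k-1)} > 0$, because slots were filled in decreasing order of processing time. Since slot $(i,k-1)$ has total fractional mass exactly $1$ (whenever slot $k$ is nonempty), this gives $p_{ij} \leq \sum_{j'} y_{j',(i,k-1)} p_{ij'}$; in words, the processing time of the integer job landing in slot $(i,k)$ is bounded by the fractional load of the previous slot. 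Summing over $k \geq 2$ telescopes to at most $\sum_j x^*_{ij} p_{ij} \leq T_i$. The single job placed in slot $(i,1)$ has processing time at most $\max_j\{p_{ij} : x^*_{ij} > 0\} \leq T_i$ by LP constraint (3), and combining gives total load at most $2T_i$ on machine $i$. The subtle point I expect to spend care on is the job that may straddle two consecutive slots: its integer matching can land in either slot, and both the cost-preservation step and the per-slot capacity argument must handle this split without double-counting.
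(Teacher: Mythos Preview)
The paper does not prove Theorem~\ref{thm:ST}; it is quoted as a black-box result from \cite{ShmoysT93} and used without proof. So there is no ``paper's own proof'' to compare against.

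That said, your sketch is a faithful rendition of the classical Shmoys--Tardos argument and is essentially correct. The slot construction, the integrality of the bipartite fractional matching polytope, and the telescoping capacity bound are exactly the three ingredients of the original proof, and your handling of the straddling job is fine: such a job is simultaneously the minimum-$p_{ij}$ entry in slot $(i,k-1)$ and the maximum-$p_{ij}$ entry in slot $(i,k)$, so the inequality $p_{ij} \le p_{ij'}$ holds with equality for that job and the bound goes through. One small addition worth making explicit for the ``polynomial time'' claim: rather than merely asserting existence of a good integral matching in the convex decomposition, note that a maximum-cost integral matching in $H$ that saturates all job vertices can be computed directly (e.g., via the Hungarian algorithm), and its cost is at least that of the fractional matching $y$, so parts (1) and (3) follow constructively. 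Also be aware that the paper's statement of constraint~(2) in the LP and property~(2) in the theorem omit the factor $p_{ij}$ (they should read $\sum_j x_{ij} p_{ij} \le T_i$ and $\sum_j x_{ij} p_{ij} \le 2T_i$); your proof correctly uses the intended processing-time version.
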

\section{Main Results}
\label{sec:main}
In Section~\ref{sec:computation} below, we provide our main computational result: a poly-time approximation algorithm for BAVWM, which implies a poly-time truthful mechanism for revenue maximization that respects ex-post IR and ex-post budget constraints. In Section~\ref{sec:structure}, we detail the structure of the optimal mechanism in this setting, as well as our computationally efficient mechanism from Section~\ref{sec:computation}.
\subsection{Computational Results}\label{sec:computation}In this section, we provide a poly-time 3-approximation for BAVWM. We begin by writing a LP relaxation, allowing the designer to award fractions of items as long as the total fraction awarded doesn't exceed $1$. We split the fraction of item $j$ awarded to bidder $i$ into two parts, $\bar{x}_{ij}$ and $\hat{x}_{ij}$. Let $\bar{x}_{ij}$ denote the fraction of item $j$ assigned to agent $i$ before exceeding $b_i$. And let $\hat{x}_{ij}$ denote the fraction of item $j$ assigned after. In other words, if $x_{ij}$ is the fraction of item $j$ assigned to agent $i$, we have $\bar{x}_{ij} + \hat{x}_{ij} = x_{ij}$, $\sum_j \bar{x}_{ij} v_{ij} \leq b_i$, and $\sum_j \bar{x}_{ij} v_{ij} = b_i$ if for any $j$, $\hat{x}_{ij} > 0$. The idea is that assigning more of item $j$ to agent $i$ before exceeding his budget increases both terms in the ``goal'' above, but assigning more after exceeding the budget only affects the second term. The LP relaxation is as follows:\\

\noindent\textbf{Variables:}
\begin{itemize}
\item $\bar{x}_{ij}$, for all agents $i$ and items $j$, denoting the fraction of item $j$ assigned to agent $i$ contributing to both the budgeted-additive and additive terms in bidder $i$'s (virtual) welfare.
\item $\hat{x}_{ij}$, for all agents $i$ and items $j$, denoting the fraction of item $j$ assigned to agent $i$ contributing to just the additive term in bidder $i$'s (virtual) welfare.
\end{itemize}
\textbf{Constraints:}
\begin{enumerate}
\item $\sum_i (\bar{x}_{ij} + \hat{x}_{ij}) \leq 1$, for all $j$, guaranteeing that no item is allocated more than once.
\item $\sum_j \bar{x}_{ij}v_{ij} \leq b_i$, for all $i$, guaranteeing that contributions to the budgeted-additive term are not overcounted.
\end{enumerate}
\textbf{Maximizing:}
\begin{itemize}
\item $\sum_{ij} m_i \bar{x}_{ij}v_{ij} + \sum_{ij} w_{ij} (\bar{x}_{ij} + \hat{x}_{ij})$, the virtual welfare. Note that as each $m_i \geq 0$ and $v_{ij} \geq 0$, the optimal solution will never have $\hat{x}_{ij} > 0$ unless $\sum_j \bar{x}_{ij} = b_i$.
\end{itemize}

It is clear that any solution to BAVWM has a corresponding fractional solution to this LP. So the goal is to solve this LP and round the fractional solution to an integral one without too much loss. The idea is that the feasible region now looks pretty similar to that of the generalized assignment problem, asking for an assignment of jobs to machines such that the capacity of machine $i$ is at most $b_i$. We first prove the following rounding theorem, which is a near-direct application of Theorem~\ref{thm:ST}.

\begin{theorem}\label{thm:rounding}
The optimal fractional solution to the above LP can be rounded in polynomial time to an integral assignment such that:
\begin{enumerate}
\item $\sum_i (\bar{x}_{ij} + \hat{x}_{ij}) \leq 1$ for all $j$.
\item $\sum_j \bar{x}_{ij} v_{ij} \leq 2b_i$ for all $i$.
\item $\sum_{ij} m_i \bar{x}_{ij}v_{ij} + \sum_{ij} w_{ij} (\bar{x}_{ij} + \hat{x}_{ij}) \geq OPT$, where $OPT$ is the value of the LP.
\end{enumerate}
\end{theorem}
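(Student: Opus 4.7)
The proof strategy is a reduction to the Generalized Assignment Problem so that we can apply Theorem~\ref{thm:ST}. I would build a GAP instance whose LP is equivalent to the BAVWM LP above. The items $j \in [m]$ play the role of jobs. The machines are indexed by pairs $(i,B)$ and $(i,H)$ for each bidder $i$ (standing for ``bar'' and ``hat'' copies), together with one dummy machine $0$. The bar machine $(i,B)$ has capacity $T_{(i,B)} = b_i$, processing times $p_{(i,B),j} = v_{ij}$, and costs $c_{(i,B),j} = m_i v_{ij} + w_{ij}$; the hat machine $(i,H)$ has capacity $T_{(i,H)} = 1$ (any positive value works), processing times $p_{(i,H),j} = 0$, and costs $c_{(i,H),j} = w_{ij}$; the dummy machine $0$ has capacity $T_0 = 1$, processing times $p_{0,j} = 0$, and costs $c_{0,j} = 0$. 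Constraint~(3) of the Shmoys--Tardos LP is vacuous throughout: on the hat and dummy machines $p = 0$, and on the bar machine $p_{(i,B),j} = v_{ij} \leq b_i = T_{(i,B)}$ by the reformulation of BAVWM in Section~\ref{sec:instantiation}.

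Next I would show that the LP in the statement of Theorem~\ref{thm:rounding} and the Shmoys--Tardos LP for this GAP instance are equivalent, hence have the same optimal value. Given any BAVWM LP solution $(\bar{x},\hat{x})$, set $y_{(i,B),j} = \bar{x}_{ij}$, $y_{(i,H),j} = \hat{x}_{ij}$, and $y_{0,j} = 1 - \sum_i(\bar{x}_{ij} + \hat{x}_{ij}) \geq 0$. The GAP constraint $\sum_{i'} y_{i',j} = 1$ is immediate, the capacity constraint $\sum_j y_{(i,B),j}\, p_{(i,B),j} \leq b_i$ is exactly the second BAVWM constraint, and the objectives coincide since the dummy and the $p=0$ machines contribute nothing extra beyond what the BAVWM objective already records. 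Conversely, any GAP LP solution projects back to a feasible BAVWM LP solution with the same objective by discarding the dummy variables.

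Finally I would invoke Theorem~\ref{thm:ST} on this GAP instance, which produces in polynomial time an integral $\vec{y}$ with $\sum_{i'} y_{i',j} = 1$, $\sum_j y_{i',j}\, p_{i',j} \leq 2 T_{i'}$ for all machines $i'$, and total cost at least the GAP LP optimum. Reading $\vec{y}$ back as $(\bar{x},\hat{x})$ gives the three claims of Theorem~\ref{thm:rounding}: claim~(1) follows from $\sum_{i'} y_{i',j} = 1$ once the nonnegative dummy variable is dropped; claim~(2) is the capacity guarantee at the bar machines and is automatic at the hat and dummy machines where $p = 0$; claim~(3) follows from the cost guarantee together with the LP equivalence established above. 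The only real care is in arranging the dummy machine to convert the ``$\leq 1$'' item constraint into Shmoys--Tardos's ``$= 1$'' job constraint while preserving the objective, and in verifying that the bar-machine processing times never exceed capacities; apart from that, the argument is the ``near-direct'' application promised in the text.
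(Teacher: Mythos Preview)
Your proposal is correct and essentially identical to the paper's proof: both construct a GAP instance with one bar machine and one hat machine per bidder plus a dummy machine, with the same processing times and costs, verify the LPs coincide, and invoke Theorem~\ref{thm:ST}. The only cosmetic difference is that the paper sets the capacities of the hat and dummy machines to $0$ rather than $1$; since those machines have zero processing times this is immaterial, as you yourself note.
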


\begin{proof}
We show how to interpret our LP as an instantiation of a fractional LP for the generalized assignment problem, and then directly apply Theorem~\ref{thm:ST}. We use $p_{ij}$ to denote processing times, $c_{ij}$ to denote costs, and $T_i$ to denote capacities in the created generalized assignment problem instance.

\begin{itemize}
\item Machines:
\begin{enumerate}
\item A dummy machine, $0$.
\item For all bidders $i$, a hat machine $\hat{i}$ (corresponding to the hat variables in our LP).
\item For all bidders $i$, a bar machine $\bar{i}$ (corresponding to the bar variables in our LP).
\end{enumerate}
\item Jobs: A job $j$ for all items $j$.
\item Processing times and costs:
\begin{enumerate}
\item $p_{0j} = c_{0j} = 0$ for all $j$. $T_0 = 0$.
\item $\hat{p}_{ij} = 0$ for all $j$. $\hat{c}_{ij} = w_{ij}$ for all $j$. $\hat{T}_i = 0$.
\item $\bar{p}_{ij} = v_{ij}$. $\bar{c}_{ij} = m_i v_{ij} + w_{ij}$. $\bar{T}_i = b_i$. 
\end{enumerate}
\end{itemize}

The fractional LP referenced in Theorem~\ref{thm:ST} on this instance would then be (note that the capacity constraints for machines $0$ and all $\hat{i}$ are vacuously satisfied, and that there do not exist any $i, j$ for which $p_{ij} > T_i$ by Observation~\ref{obs:2}):\\

\noindent\textbf{Variables:}
\begin{itemize}
\item $x_{0j}$, for all jobs $j$, denoting the fraction of job $j$ assigned to machine $0$.
\item $\bar{x}_{ij}$, for all machines $i$ and jobs $j$, denoting the fraction of job $j$ assigned to machine $\bar{i}$.
\item $\hat{x}_{ij}$, for all machines $i$ and jobs $j$, denoting the fraction of job $j$ assigned to machine $\hat{i}$.
\end{itemize}
\textbf{Constraints:}
\begin{enumerate}
\item $x_{0j} + \sum_i (\bar{x}_{ij} + \hat{x}_{ij}) = 1$, for all $j$, guaranteeing that every job is allocated exactly once.
\item $\sum_j \bar{x}_{ij}v_{ij} \leq b_i$, for all $i$, guaranteeing that the total processing time on machine $\bar{i}$ is at most $b_i$.
\end{enumerate}
\textbf{Maximizing:}
\begin{itemize}
\item $\sum_{ij} m_i \bar{x}_{ij}v_{ij} + \sum_{ij} w_{ij} (\bar{x}_{ij} + \hat{x}_{ij})$, the cost.
\end{itemize}

It's clear that this LP is exactly the same as our LP, just with an additional dummy bidder $0$ who collects all unallocated fractions of items. By Theorem~\ref{thm:ST}, the optimal fractional solution to this LP can be rounded in polynomial time to an integral solution whose total cost is at least as large, but where the capacity of machine $\bar{i}$ could be as large as $2b_i$, which is exactly an integral allocation of items to bidders with the desired properties.
\end{proof}

After applying Theorem~\ref{thm:rounding}, we now have an integral solution that is at least as good as the optimum, except our solution is infeasible. It's infeasible because it's ``getting credit'' for {(virtual) welfare in the budgeted-additive term} that is perhaps up to twice the budget (i.e. up to $2b_i$). An ``obvious'' fix to this problem might be to take this integral solution and {only take credit for budgeted-additive values up to $b_i$}, thereby making the solution feasible again. Unfortunately, because the objective is mixed sign, the resulting solution doesn't provide any approximation guarantee.\footnote{\mattnote{Consider, for example, the following instance: there is one buyer and two items. $v_{11} = v_{12} = 3$, $b_1 = 3$, $w_{11} = w_{12} = -2$. Then the allocation that awards both items and ``gets credit'' for up to $2b_i$ is believed to have virtual welfare $2$. However, the correctly computed virtual welfare of this allocation is actually $-1$, which clearly provides no meaningful approximation. Instead we must develop a procedure that, on this instance, would allocate just one of the items}.} Instead, we provide a simple procedure to select a feasible suballocation of this infeasible one that loses a factor of 3. 

\begin{theorem}\label{thm:2}
Given an integral allocation $\vec{x}$ satisfying $\sum_i \bar{x}_{ij} + \hat{x}_{ij} \leq 1$ for all $j$, $\sum_j \bar{x}_{ij} v_{ij} \leq 2b_i$ for all $i$, and $\sum_{ij} m_i \bar{x}_{ij}v_{ij} + \sum_{ij} w_{ij} (\bar{x}_{ij} + \hat{x}_{ij}) = C$, one can find in poly-time an integral allocation $\vec{y}$ such that:
\begin{enumerate}
\item $\sum_i (\bar{y}_{ij} + \hat{y}_{ij}) \leq 1$ for all $j$.
\item $\sum_j \bar{y}_{ij} v_{ij} \leq b_i$ for all $i$.
\item $\sum_{ij} m_i \bar{y}_{ij}v_{ij} + \sum_{ij} w_{ij} (\bar{y}_{ij} + \hat{y}_{ij}) \geq C/3$.
\end{enumerate}
\end{theorem}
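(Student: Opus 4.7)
The plan is to construct three feasible sub-allocations $\vec{y}^{(1)}, \vec{y}^{(2)}, \vec{y}^{(3)}$ whose objective values sum to at least $C$, so that the best among them certifies the $C/3$ guarantee.

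For each bidder $i$, let $S_i := \{j : \bar{x}_{ij} = 1\}$ and $H_i := \{j : \hat{x}_{ij} = 1\}$. The first step is to partition $S_i$ into three subsets $S_i^1, S_i^2, S_i^3$ with $\sum_{j \in S_i^k} v_{ij} \le b_i$ for each $k$. I would prove this by running First-Fit-Decreasing into three bins of capacity $b_i$. By Observation~\ref{obs:2} each $v_{ij} \le b_i$, and by hypothesis $\sum_{j \in S_i} v_{ij} \le 2 b_i$. If FFD were to open a fourth bin for some item of size $v^*$, then each of the first three bins would already carry load strictly larger than $b_i - v^*$, so the total mass placed so far would strictly exceed $3b_i - 2v^*$. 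Together with $\sum v \le 2 b_i$, this forces $v^* > b_i/2$. But then by the decreasing order every item placed before $v^*$ also exceeds $b_i/2$, so no two can share a bin and each of the three bins holds exactly one such item; adjoining $v^*$ gives four items each exceeding $b_i/2$ with total mass $> 2b_i$, contradicting the hypothesis.

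With this partition in hand, for $k \in \{1,2,3\}$ define $\vec{y}^{(k)}$ by setting $\bar{y}^{(k)}_{ij} = 1$ if and only if $j \in S_i^k$ and $m_i v_{ij} + w_{ij} \geq 0$, and $\hat{y}^{(k)}_{ij} = 1$ if and only if $j \in H_i$ and $w_{ij} \geq 0$ (actively dropping any item whose contribution to the objective is negative). Condition~(1) of the theorem is inherited from $\vec{x}$, and condition~(2) holds by the partition. Writing $\alpha_{ij} := m_i v_{ij} + w_{ij}$, the value of $\vec{y}^{(k)}$ equals $\sum_i \sum_{j \in S_i^k,\, \alpha_{ij} \geq 0} \alpha_{ij} + \sum_i \sum_{j \in H_i,\, w_{ij} \geq 0} w_{ij}$. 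Summing over $k$, the first contribution telescopes over the partition to $\sum_i \sum_{j \in S_i,\, \alpha_{ij} \geq 0} \alpha_{ij} \geq \sum_i \sum_{j \in S_i} \alpha_{ij}$ (dropping negatives only increases a sum), and the second is triple-counted giving $3 \sum_i \sum_{j \in H_i,\, w_{ij} \geq 0} w_{ij} \geq \sum_i \sum_{j \in H_i} w_{ij}$ regardless of sign. Combining these, $\sum_k \mathrm{val}(\vec{y}^{(k)}) \geq C$, and so $\max_k \mathrm{val}(\vec{y}^{(k)}) \geq C/3$.

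The main technical obstacle is the bin-packing claim above: a naive attempt to split each $S_i$ into two budget-feasible halves would yield a $2$-approximation, but examples like $(0.6b_i, 0.6b_i, 0.6b_i)$ show this is impossible; the fact that three bins always suffice is precisely what produces the factor of $3$. A secondary subtlety is that the objective has mixed sign, so bounding the sum of the three candidate values below by $C$ requires care. We handle this by building sign-awareness directly into the definitions of $\vec{y}^{(k)}$, which avoids needing any a-priori assumption on the signs of the $w_{ij}$'s or of $\sum_i \sum_{j \in H_i} w_{ij}$.
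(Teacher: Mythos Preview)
Your proof is correct and follows essentially the same strategy as the paper: partition each $S_i$ into three budget-feasible parts via a decreasing-order greedy packing argument, then output the best of three resulting candidates. The only differences are minor implementation choices. The paper packs by least-loaded-bin (LPT) rather than FFD, and it keeps all of $\hat{x}$ while choosing a per-bidder $k^*_i=\arg\max_k\sum_{j\in S_i^k}(m_i v_{ij}+w_{ij})$, rather than dropping negative-contribution items and choosing a single global $k$. Your sign-aware variant is actually a bit more robust: the paper's final inequality $\tfrac{1}{3}\sum_{ij}(m_i v_{ij}+w_{ij})\bar{x}_{ij}+\sum_{ij}w_{ij}\hat{x}_{ij}\ge C/3$ implicitly uses $\sum_{ij} w_{ij}\hat{x}_{ij}\ge 0$, which holds in the intended application (the LP optimum never sets $\hat{x}_{ij}>0$ when $w_{ij}<0$, and the Shmoys--Tardos rounding preserves support) but is not among the stated hypotheses of Theorem~\ref{thm:2}; your construction avoids this assumption entirely.
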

\begin{proof}
For each $i$, we wish to partition the set of items assigned to $i$ via $\bar{x}_{ij}$ (of the infeasible integral solution), $S$, into three disjoint sets $S_i^1,S_i^2,S_i^3$ such that $\sum_{j \in S_i^k} v_{ij} \leq b_i$ for all $k$. This is always possible: consider sorting the elements in decreasing order of $v_{ij}$ and greedily adding them one at a time to the $S_i^k$ with minimal weight so far. Assume for contradiction that some item $j^*$, when added, pushes some $S_i^k$ from below $b_i$ to above $b_i$. Then without $j^*$, each of $S_i^1,S_i^2,S_i^3$ must have had weight strictly larger than $b_i - v_{ij^*}$. As the total weight in all three (without $j^*$) is at most $2b_i - v_{ij^*}$, this means that $2b_i - v_{ij^*} > 3(b_i - v_{ij^*}) \Rightarrow v_{ij^*} > b_i/2$. But as we processed elements in decreasing order of $v_{ij}$, this would imply that $j^*$ was the third (or earlier) item processed, meaning that some set must have been empty, and $j^*$ couldn't have possibly pushed it over the limit (as $v_{ij} \leq b_i$ for all $j$). Therefore, at termination we must have $\sum_{j \in S_i^k} v_{ij} \leq b_i$ for all $k$. Now, define $k^* = \arg \max_{k}\{\sum_{j \in S_i^k}m_iv_{ij} + w_{ij}\}$. Let $\bar{y}_{ij} = 1$ iff $j \in S_i^{k^*}$, and $\hat{y}_{ij} = \hat{x}_{ij}$ for all $j$.


It's clear that $\sum_j \bar{y}_{ij}v_{ij} \leq b_i$ for all $i$. As $\bar{y}_{ij} \leq \bar{x}_{ij}$ for all $i,j$, it's also clear that $\sum_i \bar{y}_{ij} + \hat{y}_{ij} \leq 1$ for all $j$. Finally, by choice of $k^*$ it's also clear that $\sum_{ij} (m_iv_{ij}+w_{ij})\bar{y}_{ij} \geq \sum_{ij} (m_i v_{ij} + w_{ij})\bar{x}_{ij}/3$, and therefore $\sum_{ij} m_i \bar{y}_{ij}v_{ij} + \sum_{ij} w_{ij} (\bar{y}_{ij} + \hat{y}_{ij}) \geq C/3$, as desired.
\end{proof}

Combining Theorems~\ref{thm:rounding} and~\ref{thm:2} yields a feasible, integral allocation that is a 3-approximation by rounding the fractional solution output by our LP, and it is easy to see that the entire procedure runs in polynomial time.

\begin{theorem}\label{thm:algorithm}
There is a poly-time 3-approximation algorithm for Budgeted-Additive Virtual Welfare Maximization, which is a reformulation of GOOP($[n+1]^m,\mathbb{R}^{m+1}_+,\textsc{Revenue}$). Therefore, for all $\epsilon > 0$, there is a poly-time $(\epsilon,3)$-approximation algorithm for BMeD($[n+1]^m,\mathbb{R}^{m+1}_+,\textsc{Revenue}$). Specifically, if $\ell$ is the input length to an instance of BMeD($[n+1]^m,\mathbb{R}^{m+1}_+,\textsc{Revenue}$), the algorithm terminates in time $\poly(\ell, 1/\epsilon)$ and succeeds with probability $1-\text{exp}(-\poly(\ell, 1/\epsilon))$. 
\end{theorem}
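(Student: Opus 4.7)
The plan is to chain together Theorems~\ref{thm:rounding}, \ref{thm:2}, and~\ref{thm:CDW}, with essentially no new technical content needed beyond verifying that the compositions give what is claimed. I would structure the proof in three short steps.

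First, I would argue that the LP written at the start of Section~\ref{sec:computation} is a valid relaxation of BAVWM: given any integral BAVWM solution with value $\textrm{OPT}$, set $\bar{x}_{ij} = x_{ij}$ on a maximal prefix (in some arbitrary order over $j$) of items awarded to $i$ whose cumulative $v_{ij}$-weight stays $\leq b_i$, and put the rest into $\hat{x}_{ij}$. This produces a feasible LP solution whose objective equals the BAVWM objective on $\vec{x}$ (because $\min\{b_i,\sum_j x_{ij} v_{ij}\} = \sum_j \bar{x}_{ij} v_{ij}$ by construction). Hence the LP optimum is at least $\textrm{OPT}$.

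Second, I would solve the LP in polynomial time (it is of polynomial size and has a linear objective), and then apply Theorem~\ref{thm:rounding} to obtain an integral $\vec{x}$ satisfying $\sum_i(\bar{x}_{ij}+\hat{x}_{ij})\leq 1$ for all $j$, $\sum_j \bar{x}_{ij} v_{ij}\leq 2b_i$ for all $i$, and objective value at least the LP optimum, hence at least $\textrm{OPT}$. Feeding this $\vec{x}$ into Theorem~\ref{thm:2} with $C \geq \textrm{OPT}$ yields in polynomial time an integral $\vec{y}$ satisfying the true BAVWM budget $\sum_j \bar{y}_{ij}v_{ij}\leq b_i$ and achieving objective $\geq C/3\geq \textrm{OPT}/3$. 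By Observation~\ref{obs:1}, reading off $p_i = \sum_j \bar{y}_{ij} v_{ij}$ then gives a feasible BAVWM output whose virtual revenue plus virtual welfare is at least $\textrm{OPT}/3$. Since BAVWM is a reformulation of GOOP($[n+1]^m,\mathbb{R}^{m+1}_+,\textsc{Revenue}$) (via Observations~\ref{obs:1} and~\ref{obs:2}), this gives a poly-time $3$-approximation to GOOP for our setting.

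Third, I would invoke Theorem~\ref{thm:CDW} with $\alpha=3$ as a black box: it converts the poly-time $3$-approximation for GOOP into a poly-time $(\epsilon,3)$-approximation for BMeD($[n+1]^m,\mathbb{R}^{m+1}_+,\textsc{Revenue}$), with all runtime, success-probability, and $\poly(\ell,1/\epsilon)$ bounds transferring verbatim. There is no real obstacle here; the only step requiring any care is checking that the LP relaxation is indeed a valid upper bound on $\textrm{OPT}$ (as opposed to the LP being a relaxation of something else), which is the observation I sketched in the first paragraph. Everything else is bookkeeping: confirming that the feasibility conditions produced by Theorems~\ref{thm:rounding} and~\ref{thm:2} are exactly the BAVWM output constraints, and that negative $m_i$'s (resp.\ items with $v_{ij}>b_i$) were harmlessly normalized away via Observations~\ref{obs:1} and~\ref{obs:2} before the LP was written.
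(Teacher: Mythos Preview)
Your proof matches the paper's: combine Theorems~\ref{thm:rounding} and~\ref{thm:2} for the $3$-approximation to BAVWM, then apply Theorem~\ref{thm:CDW}. The only slip is in your first step: taking an integral prefix of the items awarded to $i$ does not in general give $\sum_j \bar{x}_{ij}v_{ij} = \min\{b_i,\sum_j x_{ij}v_{ij}\}$ (e.g.\ two items with $v_{ij}=3$ each and $b_i=4$ yields prefix weight $3$ but the min is $4$), so the LP objective on your constructed solution may fall strictly below the BAVWM value, and the bound $\text{LP-OPT} \geq \text{OPT}$ does not follow. Fractionally splitting the boundary item between $\bar{x}$ and $\hat{x}$ fixes this immediately, since the LP is fractional; with that patch your argument is just a spelling-out of the paper's ``It is clear that any solution to BAVWM has a corresponding fractional solution to this LP.''
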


We conclude this section with a remark about the special case of a single (or small constant) number of items. Notice that BAVWM can be solved exactly by exhaustive search in time $\poly(n^m)$. If $m$ is a small constant, exhaustive search may be computationally feasible, resulting in an exact algorithm (instead of a $3$-approximation).

\begin{remark}
Budgeted-Additive Virtual Welfare Maximization can be solved exactly in time $\poly(n^m)$ by exhaustive search. Therefore, for all $\epsilon > 0$, there is an $(\epsilon,1)$-approximation algorithm for BMeD($[n+1]^m,\mathbb{R}^{m+1}_+,\textsc{Revenue}$). Specifically, if $\ell$ is the input length to an instance of  BMeD($[n+1]^m,\mathbb{R}^{m+1}_+,\textsc{Revenue}$), the algorithm terminates in time $\poly(\ell, n^m,1/\epsilon)$ and succeeds with probability $1-\text{exp}(-\poly(\ell,1/\epsilon))$. 
\end{remark}

Finally, we remark that the single-item case is \emph{especially} simpler than even the two item case. We refer the reader to~\cite{CaiDW12,CaiDW13b} for complete details, but essentially the sampling procedure that results in the $\epsilon$ error of Theorem~\ref{thm:CDW} can be replaced by an exact computation \emph{only} in the single item case (and not even in the two item case), and $\epsilon$ can be set to exactly $0$. 

\begin{remark}
	\label{remark:singleitemVWM} 
Budgeted-Additive Virtual Welfare Maximization with $m = 1$ can be solved exactly in time $\poly(n)$ by exhaustive search: there are only $n$ possible outcomes, corresponding to assigning the item to exactly one of the agents. Therefore, there is a $(0,1)$-approximation algorithm (i.e. an exact algorithm) for  BMeD($[n+1],\mathbb{R}^{2}_+,\textsc{Revenue}$) (i.e. the single item case). Specifically, if $\ell$ is the input length to an instance of  BMeD($[n+1],\mathbb{R}^{2}_+,\textsc{Revenue}$), the algorithm terminates in time $\poly(\ell)$, and succeeds with probability $1$.
\end{remark}

\subsection{Structural Results}\label{sec:structure}\mattnote{
In this section, we discuss the structure of the optimal mechanism, and of the computationally efficient mechanism from Section~\ref{sec:computation}. We begin by characterizing the optimal mechanism by combining Theorem~\ref{thm:CDWstructure} with Observation~\ref{obs:1}.}

\begin{theorem}\label{thm:structure}\mattnote{ In any BMeD($[n+1]^m,\mathbb{R}^{m+1}_+,\textsc{Revenue}$) instance, the optimal mechanism can be implemented as a distribution over virtual welfare maximizers. Specifically, there exists a distribution $\Delta$ over mappings $(f_1^\delta,\ldots,f_n^\delta)$. Each mapping $f_i^\delta$ maps types $(\vec{v}_i,b_i) \in \mathbb{R}_+^{m+1}$ to a multiplier $m_i^\delta(\vec{v}_i,b_i) \in \mathbb{R}_+$ and a vector $\vec{w}^\delta(\vec{v},b_i) \in \mathbb{R}^m$. Define $\phi_i^\delta$ to be the mapping that takes as input types $(\vec{v}_i,b_i) \in \mathbb{R}_+^{m+1}$ and outputs a valuation function $\phi_i^\delta(\vec{v}_i,b_i)(\cdot)$ with $\phi_i^\delta(\vec{v}_i,b_i)(S) = m_i^\delta(\vec{v}_i,b_i) \cdot \min\{b_i,\sum_{j \in S} v_{ij}\} + \sum_{j \in S} w^\delta_{ij}(\vec{v}_i,b_i)$. The allocation rule of the optimal mechanism first samples $(f_1^\delta,\ldots,f_n^\delta)$ from $\Delta$, and on profile $(\vec{v},\vec{b})$, allocates the items according to $\arg \max_{S_1\sqcup \ldots \sqcup S_n \subseteq [m]} \{\sum_i \phi_i^\delta(\vec{v}_i,b_i)(S_i)\}$. Furthermore, if $m_i^\delta(\vec{v}_i,b_i) > 0$, bidder $i$ is charged $\min\{b_i,\sum_{j \in S_i} v_{ij}\}$. If $m_i^\delta(\vec{v}_i,b_i) = 0$, then bidder $i$ is charged $0$.}
\end{theorem}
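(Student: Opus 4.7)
The plan is to instantiate Theorem~\ref{thm:CDWstructure} for the problem BMeD$([n+1]^m,\mathbb{R}^{m+1}_+,\textsc{Revenue})$ and then collapse the price-optimization inside each generalized objective optimizer using Observation~\ref{obs:1}. This will peel off an additional factor of $1$ on the revenue coefficient and fold the payment into the allocation objective as the budgeted-additive term, while the freedom in $g_i^\delta$ provides the additive part.

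I would first apply Theorem~\ref{thm:CDWstructure} directly. It produces a distribution $\Delta$ over mappings $(f_1^\delta,\ldots,f_n^\delta)$, each sending a type $(\vec v_i,b_i)$ to a real multiplier $\hat m_i^\delta(\vec v_i,b_i)\in \mathbb{R}$ and some $g_i^\delta(\vec v_i,b_i)(\cdot)\in \mathcal{V}^\times$, such that the sampled mechanism on profile $(\vec v,\vec b)$ selects $(x,\vec p)$ maximizing $\textsc{Revenue}(\vec v,\vec b,x,\vec p)+\sum_i \hat m_i^\delta p_i+\sum_i g_i^\delta(x)$.

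Next I would unfold the $\textsc{Revenue}$ term. On any feasible $(x,\vec p)$ (namely $0\leq p_i\leq \min\{b_i,\sum_j x_{ij}v_{ij}\}$), the objective rewrites as $\sum_i(1+\hat m_i^\delta)p_i+\sum_i g_i^\delta(x)$, and is $-\infty$ otherwise. For fixed $x$, the same reasoning as in Observation~\ref{obs:1} applied with coefficient $1+\hat m_i^\delta$ shows that the optimal price is $\min\{b_i,\sum_j x_{ij}v_{ij}\}$ when $1+\hat m_i^\delta>0$ and is $0$ when $1+\hat m_i^\delta\leq 0$. Setting $m_i^\delta(\vec v_i,b_i):=\max\{0,1+\hat m_i^\delta(\vec v_i,b_i)\}\in\mathbb{R}_+$ and substituting the optimal price back in, the allocation problem reduces to maximizing
$$\sum_i m_i^\delta\cdot \min\{b_i,\textstyle\sum_{j\in S_i}v_{ij}\}+\sum_i g_i^\delta(\vec v_i,b_i)(S_1,\ldots,S_n)$$
over partitions $S_1\sqcup\cdots\sqcup S_n\subseteq[m]$. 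The stated pricing rule, charging $\min\{b_i,\sum_{j\in S_i}v_{ij}\}$ when $m_i^\delta>0$ and $0$ otherwise, is then immediate from this substitution.

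What remains, and is the main obstacle, is to identify $g_i^\delta$ with an additive function $\sum_{j\in S_i}w_{ij}^\delta$ for a vector $\vec w_i^\delta\in\mathbb{R}^m$. I expect to handle this by inspecting the LP that produces the virtual valuations in~\cite{CaiDW13b}: its feasibility and BIC constraints couple the interim variables $\pi_{ij}(t_i)$ only in a per-(bidder, item) fashion, so the dual multipliers translating into the $g_i^\delta$'s decouple across items and act additively on bidder $i$'s assigned set. Once that is checked, collecting all pieces yields the allocation objective $\sum_i \phi_i^\delta(\vec v_i,b_i)(S_i)$ exactly as stated, completing the plan.
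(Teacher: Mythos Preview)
Your approach is essentially the paper's two-line proof: instantiate Theorem~\ref{thm:CDWstructure} and collapse the price variable via Observation~\ref{obs:1}. The one thing you flag as ``the main obstacle''---that $g_i^\delta$ is additive, i.e.\ given by some $\vec w_i^\delta\in\mathbb{R}^m$---is not an obstacle at all and does not require inspecting the dual of the CDW LP: in this instantiation the valuation a type assigns to an allocation is already additive (the budget is a payment constraint, not part of the valuation of a bundle), so $\mathcal{V}^\times$ consists precisely of additive functions with arbitrary real item weights, and indeed the paper's GOOP instantiation in Section~\ref{sec:instantiation} already writes $g_i(x)=\sum_j x_{ij}w_{ij}$.
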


\begin{proof}
\mattnote{The proof starts with an application of Theorem~\ref{thm:CDWstructure} to the problem BMeD($[n+1]^m,\mathbb{R}^{m+1}_+,\textsc{Revenue}$). By Observation~\ref{obs:1}, the joint optimization over allocations $x$ and price vectors $\vec{p}$ can be accomplished by transforming the optimization into one that depends only on the allocation. Once the allocation is found, optimization of the price vector follows as in Observation~\ref{obs:1}. }
\end{proof}
\mattnote{
We remark that the virtual types involved in Theorem~\ref{thm:structure} have valuation functions that are the sum of a budgeted-additive function, and an additive function (the latter may have negative item values). We also note that the budgeted-additive component depends in a very structured way on the input type $(\vec{v}_i,b_i)$. Specifically, $b_i$ is turned into a hard cap on the bidder's maximum valuation instead of a hard budget on her ability to pay, and the additive valuation $\vec{v}_i$ is kept the same, forming a budgeted-additive function that is scaled by a positive multiplier $m_i$. The multiplier $m_i$ and additional values $\vec{w}_i$ may show little structure with respect to the input types (or perhaps none at all). } 

We also remark that the structure is especially simple in the case of a single item, because a budgeted-additive function for a single item is just a typical valuation function (where the bidder's value for the item is the minimum of her value and her budget). Specifically, the virtual type parameterized by $m_i^\delta(v_i,b_i)$ and $w_i(v_i,b_i)$ values the item at $m_i \min\{v_i,b_i\} + w_i(v_i,b_i)$. This observation leads to the following simplification:

\begin{remark}\label{remark:structure}
\mattnote{ In any BMeD($[n+1],\mathbb{R}^{2}_+,\textsc{Revenue}$) instance (i.e. the single item case), the optimal mechanism can be implemented as a distribution over virtual value maximizers. Specifically, there exists a distribution $\Delta$ over mappings $(f_1^\delta,\ldots,f_n^\delta)$. Each mapping $f_i^\delta$ maps types $(v_i,b_i) \in \mathbb{R}_+^{2}$ to an indicator bit $m_i^\delta(v_i,b_i) \in \{0,1\}$ and a virtual value $\phi_i^\delta(v_i,b_i)$. The allocation rule of the optimal mechanism first samples $(f_1^\delta,\ldots,f_n^\delta)$ from $\Delta$, and on profile $(\vec{v},\vec{b})$, allocates the item to any bidder $i^* \in \arg\max_{i} \{\phi_i^\delta(v_i,b_i)\}$ if her virtual value is non-negative, and doesn't allocate the item otherwise. Furthermore, if $m_{i^*}^\delta({v}_{i^*},b_{i^*}) =1$, bidder $i^*$ is charged $\min\{b_{i^*},v_{i^*}\}$. If $m_{i^*}^\delta({v}_{i^*},b_{i^*}) = 0$, then bidder $i^*$ is charged $0$.}
\end{remark}

\mattnote{We conclude with a statement regarding the format of our computationally efficient mechanisms from Section~\ref{sec:computation}. This is an instantiation of Algorithm~2 in~\cite{CaiDW13b}, which is used to prove Theorem~\ref{thm:CDW}. }

\begin{theorem}\label{thm:format} The mechanism providing the guarantee of Theorem~\ref{thm:algorithm} has the following format:
\\
\textbf{Phase One, Find the Mechanism:}
\begin{enumerate}
\item Write a linear program that optimizes revenue over the space of truthful, feasible interim forms (Section~\ref{sec:instantiation}). 
\item Pick an $\epsilon > 0$. Using the algorithm developed in Section~\ref{sec:computation}, and the reduction of~\cite{CaiDW13b}, solve this linear program approximately.
\item This yields an interim form corresponding to a mechanism that is an $(\epsilon,3)$-approximation.
\item The linear program also outputs auxiliary information in the form of a distribution $\Delta$ over mappings $(f_1^\delta,\ldots,f_n^\delta)$ of the same format from Theorem~\ref{thm:structure}.
\end{enumerate}

\textbf{Phase Two, Run the Mechanism:}
\begin{enumerate}
\item Sample a mapping from $\Delta$ (provided in Phase One).
\item On profile $(\vec{v},b)$, run the approximation algorithm of Section~\ref{sec:computation} for Budgeted-Additive Virtual Welfare Maximization, with input budgets $b_i$, input values $v_{ij}$, input price multipliers $m_i^\delta(\vec{v}_i,b_i)$, and input virtual values $w_{ij}^\delta(\vec{v}_i,b_i)$. Select this allocation.
\item If $m_i^\delta(\vec{v}_i,b_i) > 0$, charge bidder $i$ the minimum of their budget and their value for the items they receive. Otherwise, charge them nothing.
\end{enumerate}
\end{theorem}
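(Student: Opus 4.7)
The plan is to unwind the mechanism produced by Theorem~\ref{thm:algorithm}, which was obtained by combining the BAVWM approximation of Section~\ref{sec:computation} with the black-box reduction of Theorem~\ref{thm:CDW}. Since Theorem~\ref{thm:format} is essentially a description of what this mechanism actually computes, the proof amounts to instantiating Algorithm~2 of~\cite{CaiDW13b}—the algorithm underlying the proof of Theorem~\ref{thm:CDW}—on our specific BMeD instance and then simplifying using the problem-specific observations of Section~\ref{sec:instantiation}.

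For Phase One, I would recall that the CDW reduction solves the linear program over interim forms presented in Section~\ref{sec:instantiation} using the ellipsoid method, where the separation oracle for the space of feasible interim forms $F([n+1]^m,\mathbb{R}^{m+1}_+,\textsc{Revenue})$ is built from an algorithm for GOOP (in our case BAVWM). Plugging in the 3-approximation developed in Section~\ref{sec:computation} yields an $(\epsilon,3)$-approximation to the LP, which corresponds to an $(\epsilon,3)$-approximately optimal truthful, feasible interim form. The second thing to extract from the LP solution is the auxiliary decomposition: following Theorem~\ref{thm:CDWstructure} (and the constructive nature of its proof in~\cite{CaiDW13b}), the feasible interim form is expressed as a convex combination of interim forms induced by generalized objective optimizers, and this convex combination is precisely the distribution $\Delta$ over mappings $(f_1^\delta,\ldots,f_n^\delta)$ claimed by the theorem, with the format already identified in Theorem~\ref{thm:structure}.

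For Phase Two, I would invoke Theorem~\ref{thm:CDWstructure} once more: the implementation of the interim form samples $(f_1^\delta,\ldots,f_n^\delta)$ from $\Delta$ and then, on profile $(\vec{v},\vec{b})$, selects the allocation and price vector maximizing $\textsc{Revenue}(\vec{v},\vec{b},\vec{x},\vec{p})+\sum_i m_i^\delta(\vec{v}_i,b_i)p_i + \sum_{ij} w_{ij}^\delta(\vec{v}_i,b_i)x_{ij}$. Using Observation~\ref{obs:1} (and the rewriting in Observation~\ref{obs:2}), this joint optimization over $(\vec{x},\vec{p})$ reduces to the BAVWM objective over $\vec{x}$ alone, with prices then set to $\min\{b_i,\sum_{j\in S_i} v_{ij}\}$ when $m_i^\delta(\vec{v}_i,b_i)>0$ and to $0$ otherwise—exactly the pricing rule in the theorem statement. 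Finally, since an exact maximizer of the virtual welfare objective may be NP-hard to find, we substitute the 3-approximation of Section~\ref{sec:computation} in its place; this substitution is consistent with what the CDW reduction did in Phase One to build the mechanism, so the approximation guarantee carries through.

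The only mildly subtle point—and the main thing to check carefully—is that the same approximation algorithm is used both as the separation oracle inside the LP and as the allocation rule at runtime, so that the mechanism that is actually executed agrees with the interim form whose truthfulness and revenue guarantee are certified by the LP. This consistency is exactly what is guaranteed by the construction in~\cite{CaiDW13b}, so no further argument beyond pointing to that construction is required.
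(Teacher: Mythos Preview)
Your proposal is correct and follows exactly the approach the paper takes: the paper itself offers no proof beyond the single sentence ``This is an instantiation of Algorithm~2 in~\cite{CaiDW13b}, which is used to prove Theorem~\ref{thm:CDW},'' and you have simply (and accurately) unpacked what that instantiation entails, using Theorem~\ref{thm:CDWstructure} for the decomposition into virtual-welfare maximizers and Observations~\ref{obs:1}--\ref{obs:2} to reduce the joint $(\vec{x},\vec{p})$ optimization to BAVWM plus the stated pricing rule. Your closing remark about consistency between the separation-oracle algorithm and the runtime allocation rule is a correct and useful clarification that the paper leaves implicit.
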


Note that this mechanism has basically the same structure as the optimal mechanism, except that on every profile it only approximately maximizes virtual welfare (and we also first have to find the mechanism, which is completely described by the distribution $\Delta$). In the special case of a single item, the structure can again be simplified.

\begin{remark}
In the special case of a single item, the following algorithm finds the optimal mechanism in polynomial time:
\\
\textbf{Phase One, Find the Mechanism:}
\begin{enumerate}
\item Write a linear program that optimizes revenue over the space of truthful, feasible interim forms (Section~\ref{sec:instantiation}). 
\item Using the reduction of~\cite{CaiDW13b} and the observation in Remark \ref{remark:singleitemVWM} that Budgeted-Additive Virtual Welfare Maximization with $m = 1$ can be solved exactly, solve this linear program exactly.
This yields an interim form corresponding to the optimal mechanism. 
\item The linear program also outputs auxiliary information in the form of a distribution $\Delta$ over mappings $(f_1^\delta,\ldots,f_n^\delta)$ of the same format from Remark~\ref{remark:structure}.
\end{enumerate}

\textbf{Phase Two, Run the Mechanism:}
\begin{enumerate}
\item Sample a mapping from $\Delta$ (provided in Phase One).
\item On profile $(\vec{v},b)$, award item $j$ to any bidder $i^* \in \arg\max_{i} \{\phi_i^\delta(v_i,b_i)\}$ if her virtual value is non-negative. Don't allocate item $j$ otherwise.
\item If $m_i^\delta(\vec{v}_i,b_i) =1$, charge bidder $i$ the minimum of their budget and their value for the items they receive. Otherwise, charge them nothing.
\end{enumerate}
\end{remark}

\bibliographystyle{acmsmall}
\bibliography{budgets}

\end{document}